\newtheorem*{theorem*}{Theorem}
\def\BibTeX{{\rm B\kern-.05em{\sc i\kern-.025em b}\kern-.08emT\kern-.1667em\lower.7ex\hbox{E}\kern-.125emX}}
\begin{document}

%
\title{Leveraging Motifs to Model the Temporal Dynamics of Diffusion Networks}

%
\author{Soumajyoti Sarkar}
\email{ssarka18@asu.edu}
\affiliation{%
	\institution{Arizona State University}
	\city{Tempe}
	\state{AZ}
	\postcode{85821}
}

\author{Hamidreza Alvari}
\email{halvari@asu.edu}
\affiliation{%
	\institution{Arizona State University}
	\city{Tempe}
	\state{AZ}
	\postcode{85821}
}

\author{Paulo Shakarian}
\email{shak@asu.edu}
\affiliation{%
	\institution{Arizona State University}
	\city{Tempe}
	\state{AZ}
	\postcode{85821}
}

%
\begin{abstract}
	Information diffusion mechanisms based on social influence models are mainly studied using likelihood of adoption when active neighbors expose a user to a message. The problem arises primarily from the fact that for the most part, this explicit information of who-exposed-whom among a group of active neighbors in a social network, before a susceptible node is infected is not available. In this paper, we attempt to understand the diffusion process through information cascades by studying the temporal network structure of the cascades. In doing so, we accommodate the effect of exposures from active neighbors of a node through a network pruning technique that leverages network motifs to identify potential infectors responsible for exposures from among those active neighbors. We attempt to evaluate the effectiveness of the components used in modeling cascade dynamics and especially whether the additional effect of the exposure information is useful. Following this model, we develop an inference algorithm namely~\textsc{InferCut}, that uses parameters learned from the model and the exposure information to predict the actual parent node of each potentially susceptible user in a given cascade. Empirical evaluation on a real world dataset from Weibo social network demonstrate the significance of incorporating exposure information in recovering the exact parents of the exposed users at the early stages of the diffusion process.
\end{abstract}
%
\begin{CCSXML}
	<ccs2012>
	<concept>
	<concept_id>10003120.10003130.10003131.10003292</concept_id>
	<concept_desc>Human-centered computing~Social networks</concept_desc>
	<concept_significance>500</concept_significance>
	</concept>
	<concept>
	<concept_id>10003120.10003130.10003131.10011761</concept_id>
	<concept_desc>Human-centered computing~Social media</concept_desc>
	<concept_significance>500</concept_significance>
	</concept>
	<concept>
	<concept_id>10003120.10003130.10003134.10003293</concept_id>
	<concept_desc>Human-centered computing~Social network analysis</concept_desc>
	<concept_significance>500</concept_significance>
	</concept>
	</ccs2012>
\end{CCSXML}

\ccsdesc[500]{Human-centered computing~Social networks}
\ccsdesc[500]{Human-centered computing~Social media}
\ccsdesc[500]{Human-centered computing~Social network analysis}

\keywords{Social Networks; Information Cascades; Network Motifs}

\maketitle

\section{Introduction}
Information, genuine or spurious, spreads over time. Modeling information diffusion to understand how information propagates among a population of individuals has been a very popular research problem in the social network and machine learning community. With availability of real world data, these models can be manipulated to overcome several challenges \cite{aral2009distinguishing}. Most models in information diffusion work on the premise of influence from active neighbors: the effect of multiple active neighbors sequentially or collectively in an arbitrary order selectively enables the spread of information. These models implicitly assume the effect of multiple exposures from these active neighbors. However it does not account for how well a neighbor is structurally or temporally placed in the social network with respect to the current contagion that could eventually decide the role of a particular neighbor in influencing a user \cite{Katona11}.

In this paper, we try to address the notion of ``impactful" exposures in the overall framework of influence and the diffusion mechanism where we would elaborate what we mean by ``impactful" in the technical section - briefly, these are exposures from among a \textit{subset} of active neighbors of a user that are structurally and temporally positioned to influence that user in an ongoing contagion.  To this end, we study the spread of information by understanding the evolution of information cascades\cite{Kempe2003} which are simply messages reshared by multiple individuals in a social network. To test whether the information obtained from these ``impactful" exposures as opposed to all active neighbors is useful, we use the exposure information in a system that models these information cascades. However, for evaluating how this model with these exposures compares to the existing methods, we devise a prediction problem that infers the parents (the user which ultimately infects a user) of a susceptible user and test these models on this prediction problem. This prediction problem is similar to the link prediction problem \cite{Liben2007} although we consider the temporal setting within a more constrained setup. Summarizing, the main two less studied shortcomings of the existing information diffusion models that we focus on are as follows:

\begin{enumerate}
	\item The majority of the models that consider either social influence or viral propagation processes through active neighbors cannot identify neighbors who were responsible for exposure or consider all active neighbors of a susceptible user as nodes which exposed that user. In fact, one of the popular ways of assessing the impact of exposures is to consider the collective influence of \textit{all active neighbors} of a susceptible user and use it in a threshold model \cite{kempe2003maximizing,goyal2010learning}.
	
	\item Secondly, network inference methods \cite{Gomez20103,Gomez2011} that attempt to recover edges in a static network or in a temporal setting to predict links ahead of time assume complete observations from the start of diffusion (of the cascade) till a certain time. We attempt at inferring the activation sources in a constant evolving manner by using temporal networks and the current available information instead of all information from the beginning. 
\end{enumerate}

Studying the effect of filtered exposures facilitates understanding the diffusion process and thereafter inferring the parent nodes in the following way: (1) first this allows us to test the hypothesis that all exposures from active neighbors are not responsible in the collective influence of an infected node and since exposures precede infection, this paves the way toward inferring the parent node of an infected user from among the active neighbors by the process of elimination, (2) secondly  we leverage network motifs \cite{Milo824} which are recurring subgraphs in a network occurring significantly, for identifying exposure nodes for a susceptible user with respect to information cascades and the social network information (the past historical data). The advantage of using motifs as would be elaborated further is that it takes advantage of the position of the neighbors within the cascade as well as the past historical social network to assess whether an active neighbor of a susceptible user can even expose a user to the message and therefore be a potential candidate for its parent. Following this, our main contributions are as follows:


\begin{itemize}
	\item We devise a method that uses network motifs to prune out the potential exposure nodes for a susceptible user at different stages of an information cascade.
	\item We use this additional information from the exposure nodes to model the evolution of the information cascade based on techniques from survival analysis introduced in~\cite{Gomez2011}.
	\item To test the effectiveness of this exposure information and therefore the model of diffusion, we conduct a suit of experiments on a real world dataset from Weibo to predict the parent (from whom users reshare a piece of message) of a user. We make the following observations: (1) taking exposure into account helps in the parent inference process in the early stages of the cascade compared to the models with just temporal information (2) the parent inference using the cascade topology becomes difficult as the cascade progresses, suggesting that the structure loses any specific organization thus failing to use any additional information on motifs.
\end{itemize}

The rest of the paper is as follows: we first give a brief description of the technical preliminaries on information cascades and diffusion in Section~\ref{sec:tech_prelim}, followed by the model of information diffusion demonstrating the exposure effect in Section~\ref{sec:model}, the dataset description in Section\ref{sec:dataset} and finally the results discussion in Section~\ref{sec:results}.
 
\section{Technical Preliminaries} \label{sec:tech_prelim}
We shall represent historical social network information by a directed graph $G_{\mathcal{H}} = $ $(V_{\mathcal{H}} ,  E_{\mathcal{H}} )$. In this graph, $V_{\mathcal{H}}$ denotes the individuals who were either involved in historical diffusion processes (thus building historical  diffusion networks) or are part of the follower-followee network. We denote by $e=(i, j)$ $ \in$ $E_{\mathcal{H}}$ to imply that an information can be propagated from user $i$ to user $j$ given that  a similar diffusion took place in history or by virtue of the follower-followee network. These relationships are known apriori using observed or available data.  Further, an event or a message can originate from node $u$ $\in V_{\mathcal{H}}$ and can be reshared by a neighbor of $u$ in the network $G_{\mathcal{H}}$. This produces a cascade $C$ of reshares which comprises individuals $V_C$ = $\{v_1, v_2, \ldots, v_{\mathcal{R}}\}$ ordered by their corresponding chronologically sorted infection times $\tau_C$ = $\langle t_1, t_2 \ldots t_{\mathcal{R}} \rangle$, where $\mathcal{R}$ denotes the number of reshares/infections of $C$ or size of the cascade. We refer to these individuals as adopters of $C$. In addition, we assume that ultimately each individual $v \in V_C$ is infected by exactly one existing adopter of $C$ which we refer to as its \textit{parent} in the cascade lifecycle. This assumption relies on the observation that a user can reply to or retweet one piece of information multiple times, however it posts it the first time being  activated at a single point in time. We denote the underlying network produced by the participants of a cascade $C$ by $G_C$ = $(V_C, E_C)$ where an edge $e=(u, v) \in$ $E_C$ denotes that either $v$ reshared $C$ from an adopter $u$ which we term $parent(v)$ or there is an edge $e$ in $E_{\mathcal{H}}$. In simple terms, we add the  social/historical diffusion network links from $G_{\mathcal{H}}$ to create an augmented cascade network. We will use the set $Parents(V')$ to denote the set of parents for each node in $V' \in V_C$. We denote the time at which node $v$ reshared a cascade $C$ as $t_{v, C}$. Also, we will drop the subscript $C$ from the notations when an operation described on a cascade is applicable for all cascades. We note that the presence of social/historical network edges introduces cycles in the structure of $G_C$ which otherwise would exhibit a tree structure inherent to the property of cascades.

We use the notation $\tau'$ to denote a subsequence of $\tau$. For the purposes of notation, we will consider the subsequence $\tau''$ as the succeeding subsequence of $\tau'$ for any two subsequences $\tau'$ and $\tau''$ of $ \tau$. In our work, we create the sequence of subsequences $\tau$ = $\langle \tau'_1, \ldots, \tau'_{\mathcal{Q}} \rangle$, ordered by the starting time of each subsequence, where we denote $ \mathcal{Q}$ to be the number of subsequences for $C$, which would vary for different cascades. The method of splitting the cascade $C$ into a sequence of such subsequences $\tau'$ is described in the following section. 

\begin{figure}[!t]
	\centering
	\includegraphics[width=8.5cm, height=3cm]{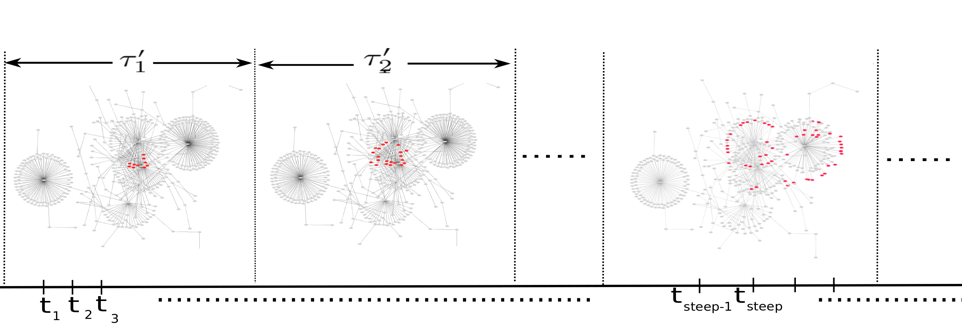}
	\caption{The cascade partitioning procedure. The likelihood for the cascades is computed over this incrementally evolving network. The number of nodes activated in each subsequence $\tau'$ is kept fixed. Red nodes denote new activated nodes in each subsequence.}
	\label{fig:net_model}
\end{figure}

\subsection*{Temporal network evolution}
\label{sec:net_analysis}
In context of $G^{\tau}$,  we denote $G^{\tau'}$ = $(V^{\tau'}, E^{\tau'})$ as the subgraph of $G^{\tau}$, where $V^{\tau'}$ denotes the individuals who reshared a cascade $C$ in the time subsequence $\tau'$ and  $E^{\tau'}$ denotes the set of edges $e=(u, v)$ where the resharing from $u$ to $v$ happened in the time subsequence $\tau'$ or there was an interaction in the historical diffusion period indicated by the presence of $e$ in $E_D$.  For the subsequences, the following conditions hold: (1) $|V^{\tau'_i}| = |V^{\tau'_j}|$ and (2) $E^{\tau'_i} \cap E^{\tau'_{j}}$ = $\emptyset$, $\forall i \neq j, \in [0, \mathcal{Q}]$. We note that the condition $|\tau'_i|$ $\neq |\tau'_j|$ may or may not hold for any $i \neq j$, that is to say the time range spanned by the subsequences in itself may differ depending on the time taken by $G^{\tau'}$ to form the network. In our work, we select and keep $|V^{\tau'}|$ fixed for every cascade in our corpus. Since we analyze the subsequences in a sequence that depicts the evolution of the cascade over time, the advantage of selecting this subsequence node set size a-priori is that we can avoid retrospective observation of the entire cascade lifecycle and incrementally progress with the network analysis using motifs over intervals until we reach $N_{inhib}$. This gives us the freedom to be agnostic about the final cascade size $|V^{\tau}|$ or $T_C$ for selecting the subsequence size $|V^{\tau'}|$ or span $|\tau'|$. 

A \textit{temporal representation} of a cascade is denoted by a sequence of overlapping subsequences $\mathcal{N}$ = $\langle N_1, \ldots, N_{\mathcal{S}} \rangle$ such that the following conditions hold: $V^{N_i}$ = $V^{\tau'_{i-1}}$ $\cup$ $V^{\tau'_i}$, and $E^{N_i}$ = $E^{\tau'_{i-1}}$ $\cup$ $E^{\tau'_i}$ $\forall$  $i \in [1, \mathcal{Q}]$. We perform network analysis on each $N $ where we drop the index subscript when we generalize the analysis for all subsequences for all cascades. Such a temporal representation $\mathcal{N}$ helps us in avoiding disjoint subsequences for network analysis and replicates a sliding window approach. This method to partition the cascade into subsequences would help us in two ways: (i) having the network topology of any subsequence $G^{\tau'}$, we can attempt to find $parent(v)$,  $ v \in V^{\tau''}$ from among nodes in $V^{\tau}$, where $\tau''$ denotes the subsequences immediately exceeding $\tau'$, and (ii) we can prune the nodes in $G^{\tau'}$ to compute the \textit{exposure nodes} temporally relevant to each node in $G^{\tau''}$, for which we use network motifs that would be described later. In this way, instead of considering all information from the start of the cascade, we only consider the recent past to predict the parents of future nodes. An example of the cascade network procedure that governs the network evolution is shown in Figure~\ref{fig:net_model}. 

\begin{figure}[!t]
	\centering
	\includegraphics[width=7.5cm, height=3.5cm]{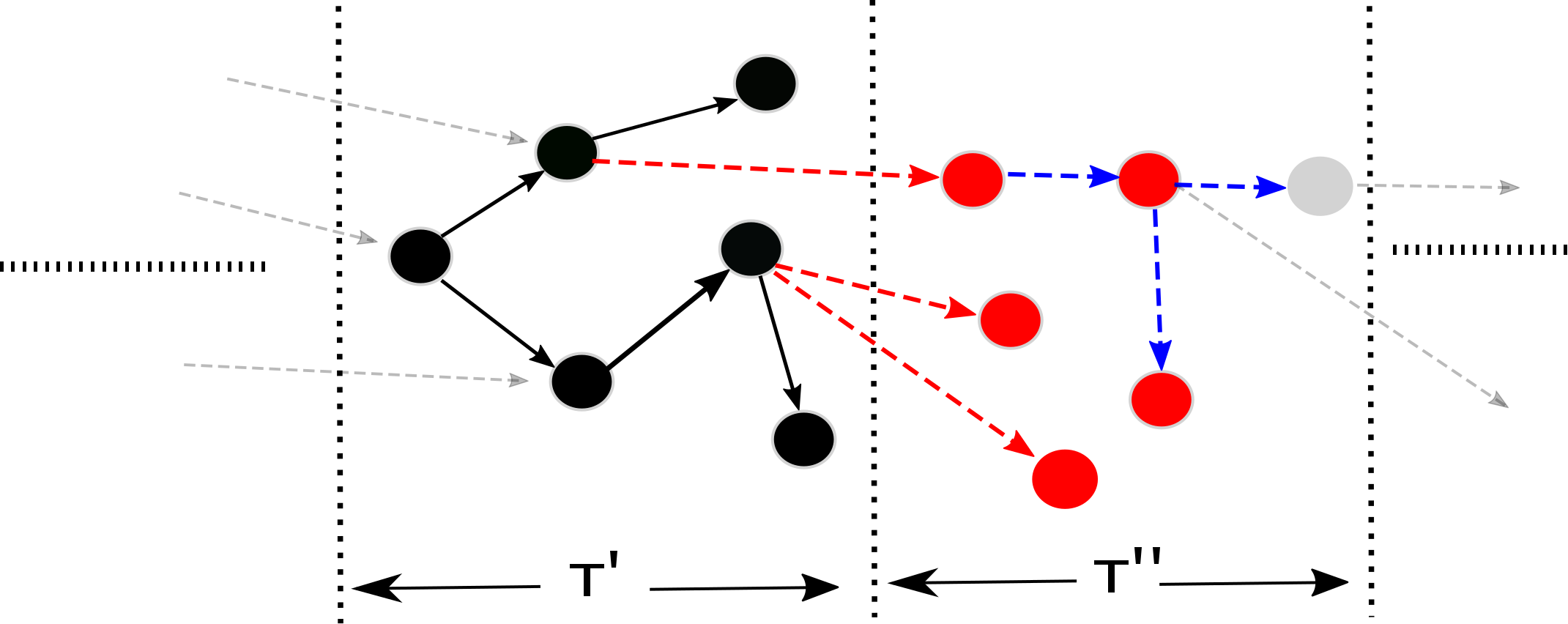}
	\caption{The problem statement setting. Black nodes represent the observed portion of the cascade in a subsequence $\tau' \in \tau_C$ in terms of diffusion. Red nodes denote the sample of nodes with known attributes which participates in $\tau''$ but with unknown $parent$ nodes. Blue edges are those which are ignored in the reconstruction process. Observe that the potential parents form the ends of cut edges between $V^{\tau'}$ and $V^{\tau''}$. We infer these red edges.}
	\label{fig:pr_state}
\end{figure}

\section{Problem Statement}
Given a temporal network in the current time phase of the cascade lifecycle and a list of susceptible and \textit{to-be} infected nodes in the next timestamp (Figure~\ref{fig:pr_state}), our goal is to discover the set of nodes from this current temporal network that are going to activate these susceptible nodes. The problem is formally defined as follows: 
\textit{Given a graph $G^{\tau'}_C$ = $(V^{\tau'},  \ E^{\tau'})$, $V^{\tau''}$, such that $\tau', \tau^{''} \in \tau_C$, $\tau''$ succeeds $\tau'$ in the set of subsequences, our goal is to find $parent(v)$ $\forall v$ $\in V^{\tau''}$ with the condition $parent(v) \in V^{\tau'}$.}
The constraint that the set of parents for nodes in $V^{\tau''}$  must be part of $V^{\tau'}$ would not always hold as shown in Figure~\ref{fig:pr_state}. Note that we do not predict whether the susceptible nodes would be infected but as a retrospective measure, we attempt to find their parents. Our problem statement in that respect introduces two new  directions for research:
\begin{enumerate}
	\item It sets this problem apart from general network inference where network edges are inferred irrespective of whether there is a chance of having a node as a parent of another node through diffusion mechanism governed by network topology.
	\item A single parent inference procedure ultimately paves the way towards diffusion path reconstruction in cascades since as the stages (or subsequences) unfold over time, each node can be traced back to its parent and the diffusion tree can be reconstructed. It is not necessary to have a fully connected tree as considered in \cite{Rozenshtein} to be able to construct the tree in a forward progressing manner.
\end{enumerate}
We do note that this temporal parcellation of the cascade introduces different lags for parent consideration for different nodes (since the subsequences are not compact and bounded manually by \textit{time} but by node size of subsequences). However empirically, we found that the time difference between a node's activation time and it's parent's activation time follows a power law where most users reshare a message from parents who reshared the same message not far before. Therefore using information from only the recent past to predict the parent generalizes well in the real-world scenario.

\section{Model Learning and Inference}
\label{sec:model}
In this section, we describe in details the 3 components that lay the roadmap: finding the potential exposure nodes using motifs to parametrically modeling the diffusion process to finally extending the learned model parameters and the motifs to infer the parents of future nodes in the cascade lifecycle. To model the likelihood of diffusion between the set of nodes in a subsequence $\tau''$,  we would use the information from the preceding subsequence $\tau'$. We proceed through the stages $[1, \mathcal{Q_C}]$ to aggregate the likelihood for $C$. We summarize the 3 components as follows: (1) we extract the parent node (available from data) and prune the exposure nodes for each node $v$ that participated in the cascade $C$ prior to its own infection, (2) we use the parent and exposure nodes for all participants to compute the likelihood of the cascades relying on previous techniques used in survival analysis and formulate a convex objective function and we use an optimization algorithm for learning the parameters used in the likelihood function and (3) finally, given a test subsequence of nodes in $\tau'$ and the sample of nodes from $\tau''$, we devise an inference algorithm based on the learned model parameters to infer $Parents(V^{\tau''}) \in V^{\tau'}$. We note that the parametric model helps in the prediction problem on unseen cascades and the network motifs are used for pruning out exposure nodes. We combine these two to infer the parent of a susceptible node - without the model and just the information about exposure nodes, we would have no mechanism to ultimately select one node among the filtered exposure nodes.

\begin{figure}[!t]
	\centering
	\includegraphics[width=7.5cm, height=1cm]{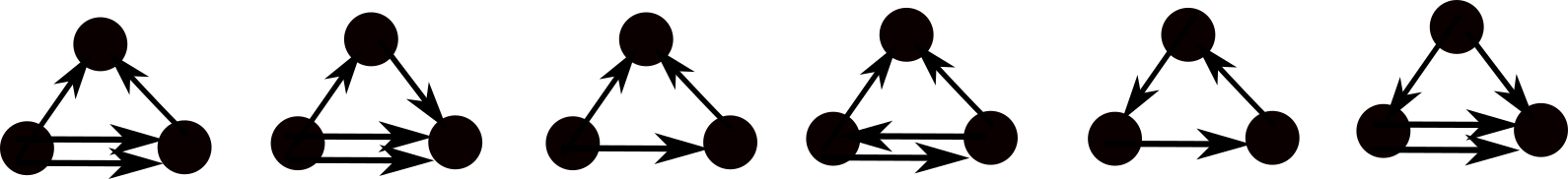}
	\caption{Motif patterns used in computing the exposure nodes. Note that some of the patterns contain parallel nodes - one of them denoting a social / historical diffusion link apart from the cascade diffusion link. These motifs have been chosen based on significance in the frequencies obtained in a few selected cascades - there are many ways of assessing significance, like z-scores, count. We use count (these appear more frequently than 1000 other motif patterns) from a few observed cascades to select these patterns.}
	\label{fig:motifs_pat}
\end{figure}

\subsection{Extracting parent and exposure nodes}
As a first step in building the model as part of the training stage, we iteratively take two consecutive subsequences $\tau', \tau^{''}$ in a cascade lifecycle in order until the end of the cascade lifecycle. Furthermore, assuming that we know \\ $Parents(V^{\tau''})$ (as ground truth  labels from the data) during each stage $\tau''$, our main goal in this step is to infer for each node $v \in V^{\tau''}$, the set of exposure nodes $exp(v) \in V^{\tau'} \setminus parent(v) $. In order to prune the nodes which are relevant to $v$ and which could have potentially exposed $v$ to $C$, we use network motifs to extract subgraphs in the network $G^{\tau''} \cup G^{\tau'}$. A network motif $M=(V_M, E_M)$ is defined as an induced subgraph of $G$, such that $\forall e $ $\in E_M$, it is always that $e$ $\in V_G$. To prune the exposure nodes $exp(v)$ for a node $v$ using network motifs, we consider the following three steps: (1) We first extract all the instances of 3-size motifs containing $v$ from $G^{\tau''} \cup G^{\tau'}$, whose patterns match one of the patterns shown in Figure~\ref{fig:motifs_pat}, (2) following this, for each $v \in V^{\tau''}$, we prune those motif instances among all such that each instance contains the $parent(v)$ among the other two nodes apart from $v$ with the remaining third vertex $v'$ $\in V^{\tau'}$. We collect the list $\mathcal{L}$ containing the set $\{v'\}$ from all pruned instances to form an initial set of potential exposure nodes for $v$, (3) finally following the work of \cite{Milo824}, which uses motif nodes as input elements in a gating function  to control the output of another node, we use these potential exposure nodes $\mathcal{L}$ together with $parent(v)$ as input to a message passing switch controlling signals to the output vertex $v$. For that we first multiply the product of time differences $\frac{1}{t_v - t_w}$ $\forall w \in \mathcal{L}$ (Note that $t_v - t_w$ always $>$ 0). We start removing nodes from the list of the sorted time differences such that the final product of the time differences crosses a threshold. We denote this remaining list of nodes as $exp(v)$. Essentially, this last step helps prune out nodes which do not satisfy the AND condition for a positive output in a gate where instead of boolean values we consider real values in a continuous space. We present a detailed algorithm for this step in the Appendix section. Apart from the gating mechanism that motifs are typically used for, the fact that the network motifs appear at frequencies much higher than expected at random \cite{Alon2002}, suggests that the frequent appearance of a node $v$ along with its potential exposure nodes in certain patterns is a result of  its frequent interactions with those nodes in the past and a possible exposure effect from among other peers. At the end of this step, we have a set of nodes $C$ = $\{v_1, v_2, \ldots, v_{\mathcal{R}}\}$ and the exposure nodes $\zeta_C$ = $\{\exp(v_1), \ldots, exp(v_{\mathcal{R}})\}$.

\subsection{Model of diffusion}
In order to model the cascade likelihood, we borrow concepts from survival analysis which has been widely used for computing the diffusion likelihood of the cascades \cite{Gomez2011}. We first briefly describe a few concepts before formulating the likelihood objective function.

\noindent \textbf{Survival Analysis.} Let $T$ be a non-negative random variable denoting the time when an event happens or an individual is infected. Let $f(t)$ be the probability density function of $T$ for which $F(t$) = $Pr(T \leq t)$ = $ \int_0^T f(x) \ dx$ is the its cumulative distribution function. The survival function $S(t)$ gives the probability that an event does not happen up to time $t$, $S(t) = Pr(T \leq t) = 1 - F(t) = \int_t^\infty f(x) \  dx $. Thus, $S(t)$ is a continuous and monotonically decreasing function with $S(0) = 1$ and $S(\infty)$ = $\lim_{t\to\infty} S(t)$ = 0.
Given $f(t)$ and $S(t)$, the hazard function $h(t)$ is the instantaneous rate that an event will happen within an infinitesimally small interval just after time $t$ given it has not happened yet up to time $t$, $ h(t) = \lim_{\Delta t \to\infty} \frac{Pr(t \leq T \leq t+\Delta t \ | \ T \geq t)}{\Delta t} = \frac{f(t)}{S(t)} $.

\noindent \textbf{Computing the likelihood of cascades.} The  activations for a cascade are represented by the sequence of tuples $(j, i, t_i)$ where $j, i$ $\in E_C$, $j$ is the parent of $i$ and $t_i$ is the timestamp at which $i$ was activated. Here the propagation happened from $j$ to $i$ denoting that $j$ = $parent(i)$ in the diffusion tree. The sequence of tuples are chronologically ordered by their activation times. All nodes $k$ $\in V_C$ such that $t_k \leq t_i$ and $k \neq j$, are termed as \textit{non-parent} nodes of $i$.  We note that $exp(v)$ is a subset of the set of non-parent nodes since the exposure nodes implicitly satisfy the constraints of non-parent nodes. 

Each directed edge in the cascade network $G_C$, $j$ $\rightarrow i$ is associated with a transmission function $f_{ji}(t_i|t_j)$, which is the conditional likelihood of the infection being spread to node $i$ at time $t_i$ given that the same infection has already affected node $j$ at time $t_j$. These functions are parametrized by parameters $\alpha_{ji}$ for every pair although later we will show how to remove the static nature of these pairwise unique transmission rates.  We focus on shift-invariant transmission functions whose value only depends on the time difference, i.e.\ , $f_{ji}(t_i|t_j)$ = $f_{ji}(t_i - t_j)$ = $f_{ji}(\Delta_{ji})$ where $\Delta_{ji} :=$ $t_i - t_j$. If there is no infection edge from $j$ to $i$, we will have $f_{ji}(\Delta_{ji})$ = $0$ and $h_{ji}(\Delta_{ji}) = 0$. For each directed diffusion link from $k$ $\rightarrow$ $i$, such that $k \in \zeta_{i, C}$, we associate an exposure function $E_{ki}$ parametrized by $\eta_k$ that quantifies the extent to which the exposure node $k$ affects the node $i$ in adopting the cascade $C$. An illustration of various node components in the diffusion mechanism is shown in Figure~\ref{fig:cas_inf}. 

\begin{figure}[!t]
	\includegraphics[width=8.5cm, height=2.6cm]{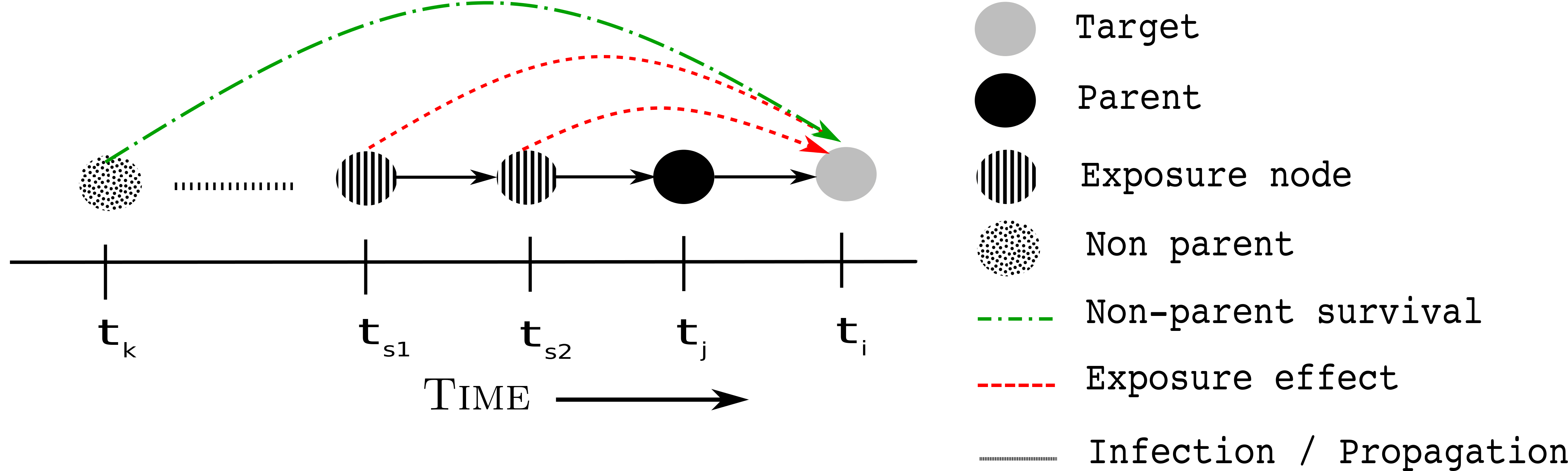}
	\caption{The components used for computing likelihood of node activations in a cascade.}
	\label{fig:cas_inf}
\end{figure}

The likelihood of the observed infections of a cascade $C$ can be formulated as $f(\mathbf{t}; \mathbf{A}, \mathbf{Z}) = \prod_{t_i \leq T_{\mathcal{R}}} f(t_i|t_1, \ldots, t_{\mathcal{R}}; \mathbf{A}, \mathbf{Z})$, where $A$ denotes the set of transmission parameters for each pair $u, i$, $\in E_C$, and $Z$ denotes the set of exposure parameters for each node $w \in \zeta_C$. Since we assume node infections are conditionally independent given the parents of the infected nodes, the likelihood factorizes over the nodes. The likelihood that the parent $j$ of an individual $i$ infected it at time $t$, $t \in [t_1, T_{\mathcal{R}}]$ and that $i$ survived from all other adopters of $C$ prior to time $t$ is given by 
\begin{equation*}
TL = f(t_i | t_j ; \alpha_{j, i}) \ \times \ \prod_{j \neq k; \ t_k < t_i} S(t_i | t_k; \alpha_{k, i})
\end{equation*}
To add the exposure effect on node $i$ from nodes in $exp(i)$, we have the exposure likelihood term 
$ EL \  =  \ \prod_{m \in \zeta_i} E(t_i | t_m; \eta_m) $. Then the likelihood of an individual cascade is defined as follows: 
 
\begin{multline*}
\mathbf{f(t^{\leq T_{\mathcal{R}}}; A, Z)}  = \prod_{t_i \leq T_{\mathcal{R}} } f(t_i | t_j ; \alpha_{j, i}) \ \times \ \prod_{j \neq k; \ t_k < t_i} S(t_i | t_k; \alpha_{k, i})  \times \\ \prod_{m \in \zeta_i} E(t_i | t_m; \eta_{m})
\end{multline*}

Note that the exposure term $E(.)$ is computed from only the exposure nodes extracted from the motifs on the subsequences - so the likelihood is computed subsequence wise in order of time. Finally, the likelihood of all the cascades is defined as $\prod_{c\in C} \mathbf{f(t^c; A, Z)}$.

\noindent \textbf{Centrality modulated Rayleigh model.} In this paper, we adopt the Rayleigh model for the transmission and the survival functions. Given a pair of nodes $j$ and $i$, the density, hazard and survival functions of Rayleigh distributions are:$f_{ji}(\Delta_{ji}) = \alpha_{ji}\ . \ \Delta_{ji}\ .\ exp\big(-\frac{1}{2 } \ . \ \alpha_{ji} \ . \ \Delta_{ji}^2\big)$, $h_{ji}(\Delta_{ji}) = \alpha_{ji}\ . \ \Delta_{ji}$ and $S_{ji}(\Delta_{ji}) =  exp\big(-\frac{1}{2 } \ . \ \alpha_{ji} \ . \ \Delta_{ji}^2\big)$ respectively. The drawback of directly learning the node specific and pairwise the above parameters $\mathbf{A}$, $\mathbf{Z}$ is that, firstly, for any pair $(j, i)$ of nodes in the test data for which the infection likelihood is calculated and not observed in training data, it would not be possible in any way to compute or infer them instantly. Secondly, following positive results in \cite{Cui2013}, we use the centrality of nodes from the historical diffusion/social network to decompose these parameters in terms of node attributes, so that the corresponding transmission or exposure parameters of any pair of nodes during test time can be inferred instantaneously, if the node attributes are known. We decompose the parameters as $\alpha_{ji} = x_i \ . \ x_j \ . \ \lambda$, where the symbol $x$ denotes the node attribute computed from the diffusion knowledge network and $\lambda $ denotes the corresponding scaling control parameter. The exposure function on a node $j$ is defined as $E_{j, m} =   \eta_m  \ . \ exp(- \ (t_j - t_m)) $ and the exposure parameter is defined accordingly as $\eta_m = x_m \ . \ \beta$.

\subsubsection{Objective Formulation.}
Considering $N$ distinct nodes, the likelihood optimization is formulated as follows:
\begin{equation}
\begin{aligned}
& \underset{A, Z}{\text{minimize}}
& &  -\sum_{c \in C} \mbox{log} \ f(t^c; A, Z) \\
& \text{subject to}
& & \alpha_{j, i} \geq 0, \forall i = 1 \ldots N, i \neq j \\
&&&  \eta_{m} \geq 0, \forall m = 1 \ldots N
\end{aligned}
\label{eq:opt_1}
\end{equation}
%
We restrict the proof here  due to shortage of space but it can be decomposed into a sum of convex functions.
\begin{algorithm}[t!]
	\KwIn{$G^{\tau'}$, $V^{\tau''}$, $\tau', \tau''$, $X$ the node attributes, $\lambda$, $\beta$}
	\KwOut{$E_{cut}$, the cut edges between $V^{\tau'}$ and $V^{\tau''}$ }
	
	$E_{cut}$ $\leftarrow$ $\phi$ \\
	\For{each node $v$ in $V^{\tau''}$}{
		$parent(v)$ $\leftarrow$ NULL, max\_LL $\leftarrow $ -INF\\
		\For{each pair $(s, e)$ $\in$  $E^{\tau'}$}{
			\tcc{Remove temporally far nodes}
			\If{ $t_{v, C} - t_{s, C}$  $>$ $t_{thresh}$} { 
				continue
			}
			$\alpha$ $\leftarrow$ $\lambda$ * $X[v]$ * $X[e]$, \ \ 
			$\eta$ $\leftarrow$ $\beta$ * $X[s]$
			
			Calculate Transmission LL $TL$ using $\alpha$, $s$, $v$ \\
			Calculate Exposure LL $EL$ using $\eta$, $e$

			\If{EL * TL $>$ max\_LL}{
				max\_LL $\leftarrow$ EL * TL\\
				$parent(v)$ $\leftarrow$ s
			}
			
		}
		$E_{cut}$ $\leftarrow$ $E_{cut}$ $\cup$ $(parent(v), v)$
	}
	
	return $E_{cut}$
	
	\caption{Inferring the cut edges: \textsc{InferCut}}
	\label{alg:edge_infer}
\end{algorithm}
Considering the centrality modulated Rayleigh model, the final objective function to be optimized can be written as
\begin{equation}
\label{eq:final_obj}
G(\textbf{A}, \textbf{Z}, \lambda, \beta) = G_1(\textbf{A}, \textbf{Z}) + n G_2(\lambda) + m G_3(\beta) 
\end{equation}
\begin{equation}
\label{eq:likelihood}
G_1(\textbf{A}, \textbf{Z}) = - \mbox{log} \ L(\textbf{A}, \textbf{Z})
\end{equation}
\begin{equation}
\label{eq:lambda}
G_2(\lambda) = \frac{1}{2P} \lVert \alpha - X \circ X \ . \ \lambda \rVert + \gamma_{\lambda} \lVert \lambda \rVert_1 
\end{equation}
\begin{equation}
\label{eq:beta}
G_3(\beta) = \frac{1}{2P} \lVert \eta - X \ . \ \beta \rVert + \gamma_{\beta} \lVert \beta \rVert_1
\end{equation}

where $P$ denotes the number of unique pairs of infections in the cascades. We prove that the final objective function in Equation~\ref{eq:final_obj} has a global minimum which we have included in the Appendix Section. Following this, we use standard Lasso solvers for Equation ~\ref{eq:lambda} and Equation~\ref{eq:beta} and the gradient descent with co-ordinate descent strategy method for solving Equation ~\ref{eq:likelihood}.

\subsection{Parent Inference}
Finally for the prediction problem, we consider a test subsequence $\tau''$ in the cascade lifecycle. For inferring the set $Parents(V^{\tau''})$ for a subsequence $\tau''$, we consider nodes in $G^{\tau'}$ in the preceding interval $\tau'$, whose timestamps, node attributes and edges are taken as input as shown in Figure~\ref{fig:pr_state}, and another set of nodes  $V^{\tau''}$ with known node attributes (or sampled from distribution) whose timestamps are known. Our main task is to infer $Parents(V^{\tau''})$ with $v' \in V^{\tau'}$, $\forall v' \in Parents(V^{\tau''})$  and return the cut-edges $E_{cut}$ between these two sets of nodes. The main difficulty of this case of dynamic network inference is that as an inference task, we are not predicting the diffusion edges that might have occurred within $V^{\tau''}$, an example of which is shown in Figure~\ref{fig:pr_state} marked by blue edges. Due to this condition, the number of actual cut edges is low since many nodes in $V^{\tau''}$ would be activated because of the influence of nodes within $V^{\tau''}$ itself or due to nodes activated prior to $\tau'$ and not due to nodes in $V^{\tau'}$. We present the algorithm for inference in Algorithm~\ref{alg:edge_infer}. For inferring $parent(v)$, $v \in V^{\tau''}$, we form 3-size motifs using $s, e, v$, the edge $(s, e)$ $\in E^{\tau'}$. That is, for each such directed edge, we consider $s$ as potential parent and $e$ as exposure node and calculate the product of likelihoods. We take that pair $(s, e)$ for $v$ which has the highest product and consider the corresponding $s$ as the parent.
\vspace{-.1in}
\section{Experimental Evaluation}
\label{sec:results}
In this section, we first describe the statistics of the dataset used for learning and evaluation. We perform a suite of experiments to evaluate our model with respect to some state-of-the art methods that are able to handle these kind of problems.
\subsubsection{Dataset.}
\label{sec:dataset}
To build the historical diffusion network, we use the dataset provided by WISE 2012 Challenge\footnote{http://www.wise2012.cs.ucy.ac.cy/challenge.html}. The dataset contains user data and the reposting information of each microblog along with the reposting times which enables us to form the cascades for each microblog separately. The diffusion network  $G_{\mathcal{H}}$ mentioned in the \textit{Preliminaries Section} is created by linking any two users who are involved in a microblog reposting action within the period May 1, 2011 and July 31, 2011. The primary purpose of $\mathcal{H}$ is two fold: acting as a social network to exact possible diffusion paths in future and for computing the node attributes $x$ used in the model.  This historical diffusion network has 6,470,135 nodes and 58,308,645 edges where the average node degree is 18.02. From the corpus of cascades which spanned between June 1, 2011 and August 31, 2011, we only work with cascades with more than 300 nodes in order to have sufficient information to evaluate the temporal networks in an incremental fashion. We collect a total of 4000 cascades in the process and we train using 3000 samples and put 1000 cascades aside for testing.\\
\textbf{Distribution of historical interactions.} As a first experiment, we measure the following metric $A_{v2u}$ as has been defined in \cite{goyal2010learning} as  the number of times $u$ reshared a message from $v$ irrespective of the message and it happened in history. For each user $u$ in the training set that participated in some cascade in the training set, we compute the distributions of $A_{v2u}$ for two sets of $v$: one where $v$ in $exp(u)$ for the cascade in consideration and the other $v$ in $nbrs(u) \ exp(u)$ where $nbrs(u)$ denote the neighbors of a user $u$ in $G_{\mathcal{H}}$. Figure~\ref{fig:dist} shows the density distributions of this measure for the two sets. It reveals that \textit{the nodes which motifs prune as exposure nodes are also the ones who are responsible for more propagation compared to the nodes which are just neighbors but had been pruned out by motifs.}\\
\textbf{Baseline Methods.} For evaluating the performance of the inference algorithm, let $E_{cut}$ be the set of actual edges between $V^{\tau'}$ and $V^{\tau''}$ and $\hat{E}_{cut}$ be the set of edges inferred by our algorithm. We define the following metrics of the classification model as follows: $ Precision = \frac{1}{|C|} \sum_{c \in C} \frac{|E_{cut, c} \cap \hat{E}_{cut, c}|}{|\hat{E}_{cut, c}|}$ and \\ $Recall = \frac{1}{|C|} \sum_{c \in C} \frac{|E_{cut, c} \cap \hat{E}_{cut, c}|}{|E_{cut, c}|}$. 
Simply put, we measure what fraction of the predicted edges are part of the ground truth edges as precision, and what fraction of the ground truth edges are part of the predicted edges as recall. Since for each node in $V^{\tau''}$, we try to predict  and evaluate its parent in $V^{\tau'}$ only when the ground truth parent is in $V^{\tau'}$ while ignoring the rest of the nodes in $V^{\tau''}$, we evaluate the performance of the models on recall. So, for every node in $V^{\tau''}$, every model gets only one chance to select one node among the nodes in $V^{\tau'}$ and therefore the number of predicted edges is equal to the number of ground truth edges making the precision and recall same for evaluation unlike a network inference problem. 
\begin{figure}[!t]
	\centering
	\includegraphics[width=7cm, height=4cm]{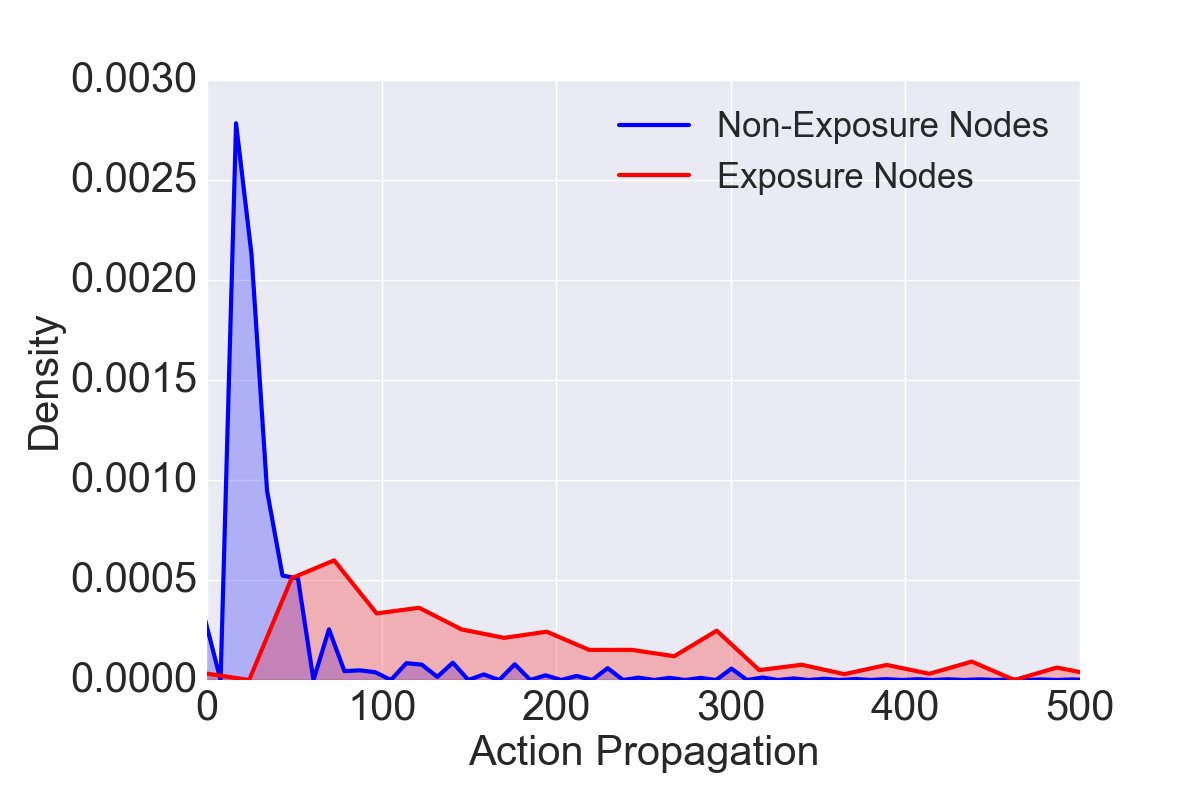}
	\caption{Distribution of $A_{v2u}$ - the propagation count from user $v$ to user $u$.}
	\label{fig:dist}
\end{figure}
Since there are no direct methods which have the same setting as our problem, we compare our framework with some recent advances in the field of information diffusion models that predict cascade adoption in the presence of active neighbors and exposures. We replicate one method that looks at all active neighbors without any pruning techniques \cite{goyal2010learning} to model the probability of action adoption using the measure $A_{v2u}$ and a second method that uses a probabilistic version of the General Threshold \cite{kempe2003maximizing} but taking exposures into account. Additionally we consider variations of our own model. The following frameworks have been used for comparison: (1) \textbf{InferCut - No Exposure(NE)}: For this procedure, we follow the same \textsc{InferCut} algorithm except that in Algorithm~\ref{alg:edge_infer}, we replace the likelihood value by just the transmission likelihood and avoid the other factor of exposure likelihood. We perform this to examine whether the additional effect of exposures have any influence on adoption, (2) \textbf{ Bernoulli Model}: We use this model to find the most probable nodes in $V^{\tau'}$ that infects the nodes in $V^{\tau''}$. Under this model, the probability that a user $u$ influences a node $v$ is given by the MLE estimator of success probability which comes from a Bernoulli trial. It is computed as $\frac{A_{v2u}}{A_v}$, $A_v$ denoting the number of previous cascades $v$ has participated in, (3)  \textbf{Complex Contagion (CC) Model}: This model \cite{fink2016investigating} is a variation of the General Threshold Model where the likelihood of adoption increases with more active neighbors. To approximate threshold-like behavior, they used a version of the logistic sigmoid function to calculate join probability of exposures, and (4) \textbf{Random - Temporal}: For each node in $V^{\tau''}$, we randomly select a node in $V^{\tau'}$ as its parent node that is $\Delta t$ time units within its own occurrence. For temporal heuristics for consideration, we rule out nodes in $V^{\tau'}$ which are temporally far away from it. 

For both the Bernoulli Model and the CC model, we compute the measures for all possible parents for each target node and only consider the node among the parents which has the highest value in terms of that measure. \\
\begin{figure}[!h]
	\centering
	\minipage{0.4\textwidth}
	\includegraphics[width=6cm, height=4cm]{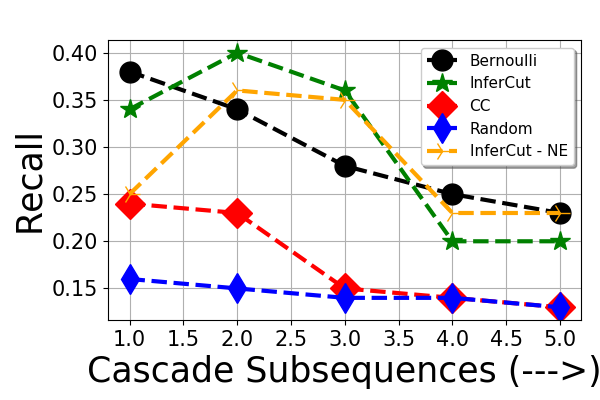}
	\subcaption{}
	\endminipage
	\hfill
	\\
	\minipage{0.4\textwidth}
	\includegraphics[width=6cm, height=4cm]{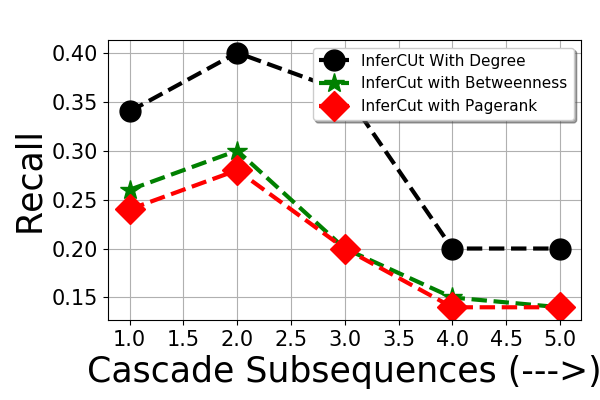}
	\subcaption{}
	\endminipage
	\hfill
	\caption{(a) Recall over the time subsequences from the start of the cascade, (b) Comparison of different centralities for the Rayleigh model.}
	\label{fig:acc}
\end{figure}

\noindent \textbf{Experimental setup and evaluation metrics.} For our cascade splitting method described in the \text{Preliminaries Section}, we fix $|V^{\tau'}|$ to 40 for all the cascades. That is to say, we consider a set of 40 nodes as parents when trying to predict the cut edges with the other endpoints being the susceptible target nodes in the succeeding subsequence. We use the \textsc{FANMOD} algorithm in \cite{wernicke2006fanmod} which is a widely used algorithm to find network motifs. For training we accumulate 3000 cascades and set aside an additional 1000 cascades for testing. These cascades have been selected by shuffling through the set randomly, yet we only perform the training one time due to the expensive nature of the computation. We set the parameters $m=5$ and $n=5$ in Equation~\ref{eq:final_obj} for all the evaluation results shown although we show a hyperparameter sensitivity towards the end. 
 We evaluate our models by considering progressing intervals/subsequences of nodes from the start for all the cascades leaving the first interval. For an interval $i \in [1, \mathcal{Q}_C]$ we consider the network induced in the preceding interval $i-1$ as the input to \textsc{InferCut} and for each node $v$ in $V^{\tau'_i}$, we predict $parent(v)$. As shown in Figure~\ref{fig:acc}, we consider 5 intervals from start for every cascade, and since we consider cascades over 300 reshares, these intervals are guaranteed to exist for all cascades in our dataset. \\
 \begin{figure}[!h]
 	\centering
 	\minipage{0.25\textwidth}
 	\includegraphics[width=4cm, height=3cm]{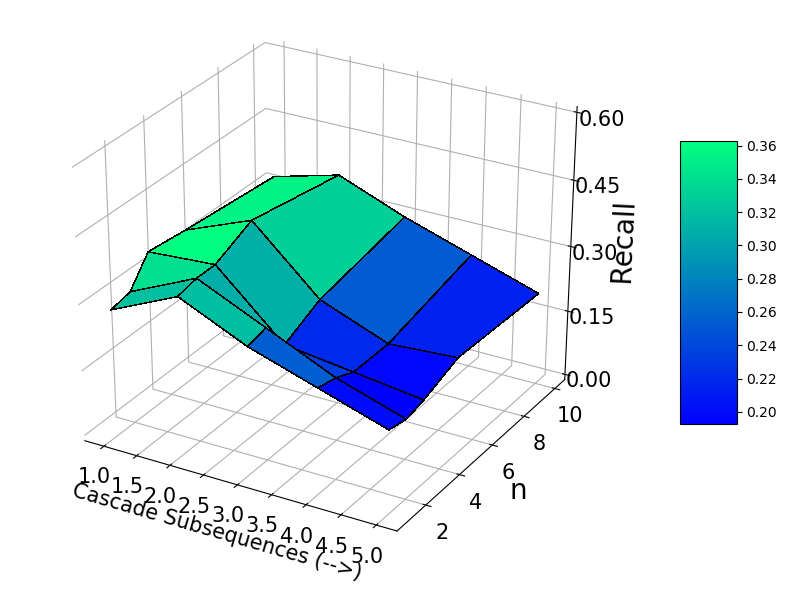}
 	\subcaption{}
 	\endminipage
 	\hfill
 	\minipage{0.22\textwidth}
 	\includegraphics[width=4cm, height=3cm]{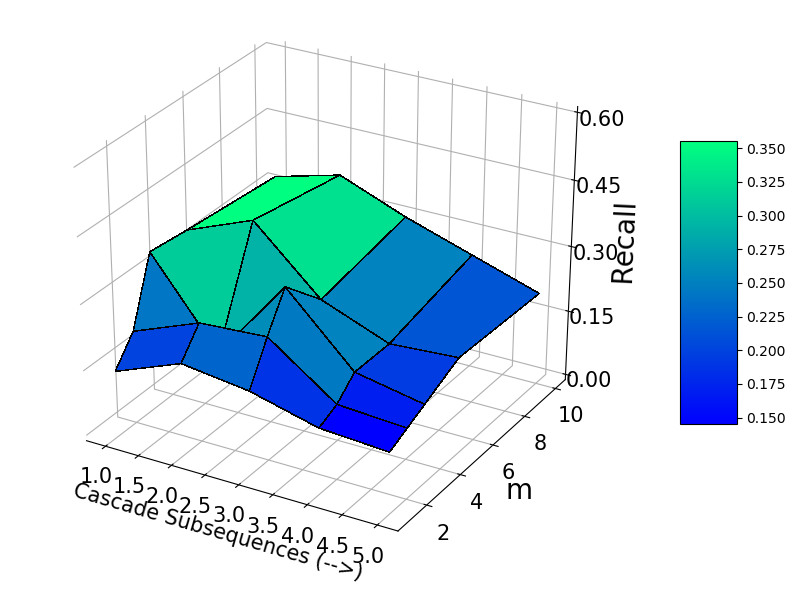}
 	\subcaption{}
 	\endminipage
 	\hfill
 	\caption{Sensitivity of the model w.r.t (a) $n$ that controls the effect of the transmission between node pairs, (b) $m$ that controls the effect of exposures.}
 	\label{fig:param_sens}
 \end{figure}

\noindent \textbf{Observations.} We present the results for all the models shown in Figure~\ref{fig:acc} from which we find that the low recall of the \textsc{Random - Temporal} approach suggests the challenging nature of this problem. The exposure effect with InferCut helps in obtaining a maximum recall in the initial stages of the cascades recovering almost 40\% of the actual edges in the second interval, although after the third interval, the performance degrades below the \textsc{No Exposure - InferCut} method. One reason for this performance is that over time, the topology of the cascade structure changes rapidly to an extent that it cannot be well represented by network motifs making the exposure effect play a very little role in the diffusion towards the end. We find that the sigmoid function in the CC model performs the worst among all the models apart from the random case, and one of the reasons being is that the model heavily relies on just the exposure count and not the time difference nor the node attributes - in a way just the exposure count information is not sufficient for diffusion processes. In addition, we experiment with different node attributes for the centrality modulated Rayleigh model where the transmission rate is parameterized by the node attributes of the edge nodes. We test with 3 centralities which are pre-computed based on $G_{\mathcal{H}}$: nodal degree, betweenness and Pagerank. From Figure~\ref{fig:acc}(b), we find that while the model performs the best with nodal degree, the Pagerank measure yields poor performance compared to the other two. This shows that a users' decision is largely impacted by the local network surrounding it and therefore the degree plays a larger role than measures which consider the global network structure.\\
\noindent \textbf{Hyperparameter Sensitivity.} We discuss the model sensitivity towards the hyperparameters $n$ and $m$ which control how much the transmission and  the exposure respectively impact the performance of prediction. From Figure~\ref{fig:param_sens}(b), we find that the performance in terms of recall gradually increases till the value of 5 after which it saturates, denoting that lowering the exposure hyperparameter that uses the node attribute values reduces the number of correctly recovered edges and thus establishing the importance of the node attributes in terms of the exposure hyperparameter. However in Figure~\ref{fig:param_sens}(a), we find that the transmission hyperparameter does not have much impact when varied over the ranges of $m$. 
\vspace{-.1in}
\section{Related Work}
Modeling the dynamics of network diffusion \cite{Gomez2011, sarkar2017understanding} has attracted widespread attention recently due to its applications in spread of epidemics.  Modeling the trajectory of the cascade growth has been studied in \cite{Cui2013}, where the authors build a model to understand the underlying diffusion process and in the event predict the future course of the cascade lifecycle.  The closest in the area of reconstructing epidemics over time has been done in \cite{Rozenshtein} where the authors use Steiner trees to infer the propagation structure. However in most real datasets, observing the cascade from the start is difficult due to time of observation \cite{sundareisan2015hidden} and missing data issues. In that context, our work only relies on the current information available to determine the source of infections. Similarly, a recent study \cite{stochastic_temporal} on recovering the stochastic temporal networks of diffusion rely on the null models of dyadic mutually independent interactions. Although external influence or cumulative exposures  \cite{external_inf} has been attributed to information diffusion in microblogging platforms, the absence of any concrete study of how exposures affect the diffusion rate has been one of motivations of this work.
\vspace{-.1in}
\section{Conclusion}
We evaluated the need and effectiveness of additional information available from the partial cascade structure, in determining source of infections for future nodes. We further showed how effective this information over time could be across different stages of the cascade lifecycle, while being able to prune relevant nodes for this additional information to keep the computation scalable. One of the directions in which this research can be extended is to be able to rank exposure nodes in terms of certain functions in order to be able to directly infer the source of infection with high certainty. \\
 
\noindent \textbf{Acknowledgements.} Some of the authors are supported through the  ARO grant W911NF-15-1-0282.

\bibliographystyle{abbrv}
\bibliography{main}

\appendix
\section{Optimizing the objective function}
Considering $N$ distinct nodes, the objective function to be optimized based on the likelihood is formulated as follows:
\begin{equation}
\begin{aligned}
& \underset{A, Z}{\text{minimize}}
& &  -\sum_{c \in C} \mbox{log} \ f(t^c; A, Z) \\
& \text{subject to}
& & \alpha_{j, i} \geq 0, \forall i = 1 \ldots N, i \neq j \\
&&&  \eta_{m} \geq 0, \forall m = 1 \ldots N
\end{aligned}
\label{eq:opt_1}
\end{equation}

The likelihood can be written as:

\begin{multline}
l(t^{\leq \mathcal{T_R}}; \ \mathbf{A}, \mathbf{Z}) = \prod_{t_i \leq \mathcal{T_R} } \big[f(t_i \ | \ t_j; \alpha_{j,i})  \ \times \\ 
\prod_{j \neq k; t_k <t_i} S(t_i \ | \ t_k; \alpha_{j,i})  \\
\times \prod_{m \in \zeta_{i}} E(t_i \ | \ t_m; \eta_m) ]
\end{multline}

Rearranging the terms, we can also write the likelihood as follows:

\begin{multline}
l(t^{\leq \mathcal{T_R}}; \ \mathbf{A}, \mathbf{Z}) = \prod_{t_i \leq \mathcal{T_R} } \big[h(t_i \ | \ t_j; \alpha_{j,i})  \ \times \\ 
\prod_{t_k <t_i} S(t_i \ | \ t_k; \alpha_{j,i})  \\
\times \prod_{m \in \zeta_{i}} E(t_i \ | \ t_m; \eta_m) ]
\end{multline}

\begin{theorem*}
	Under the Rayleigh distribution for the survival and the hazard functions in the parameters of the pairwise transmission likelihoods, the joint optimization problem is convex in separate parameters \textbf{A} and \textbf{Z}.
\end{theorem*}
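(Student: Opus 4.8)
The plan is to show that the negative log-likelihood in the objective stated above decomposes into a sum of terms, each convex in either \textbf{A} or \textbf{Z} and never coupling the two, so that separate (indeed joint) convexity follows from the elementary fact that a sum of convex functions over a convex domain is convex. First I would start from the \emph{rearranged} likelihood given just above the theorem, in which the transmission density is written through $f=h\cdot S$ so that the per-node contribution is $h(t_i\mid t_j)\prod_{t_k<t_i}S(t_i\mid t_k)\prod_{m\in\zeta_i}E(t_i\mid t_m)$. Taking $-\log$ converts the outer product over activated nodes into a sum, so it suffices to verify convexity of the negative log of each individual factor.

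Next I would substitute the Rayleigh forms $h_{ji}(\Delta_{ji})=\alpha_{ji}\,\Delta_{ji}$ and $S_{ji}(\Delta_{ji})=\exp(-\tfrac12\alpha_{ji}\Delta_{ji}^2)$ together with the exposure form $E_{i,m}=\eta_m\exp(-(t_i-t_m))$, and read off three groups of terms. The hazard contributes $-\log h_{ji}=-\log\alpha_{ji}-\log\Delta_{ji}$, whose first piece is convex in $\alpha_{ji}$ because $-\log$ is convex and whose second piece is a data constant. Each survival factor contributes $-\log S_{ki}=\tfrac12\alpha_{ki}\Delta_{ki}^2$, which is linear, hence convex, in $\alpha_{ki}$. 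Each exposure factor contributes $-\log E_{i,m}=-\log\eta_m+(t_i-t_m)$, whose first piece is convex in $\eta_m$ and whose second is again a constant shift. The key structural observation is that the exponents inside $S$ and $E$ are \emph{affine} in the parameters, so $\log S$ and $\log E$ collapse to linear terms and no product of an $\alpha$ with an $\eta$ ever appears.

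I would then collect terms: up to an additive constant the objective equals $\sum_i\bigl[-\log\alpha_{j,i}+\sum_{t_k<t_i}\tfrac12\alpha_{k,i}\Delta_{k,i}^2\bigr]$, with $j=parent(i)$, which depends only on \textbf{A}, plus $\sum_i\sum_{m\in\zeta_i}(-\log\eta_m)$, which depends only on \textbf{Z}. Since each summand is convex and the feasible region cut out by the constraints $\alpha_{j,i}\ge 0$, $\eta_m\ge 0$ is the convex positive orthant on which the $-\log$ terms are finite, the \textbf{A}-block is convex in \textbf{A} with \textbf{Z} fixed and the \textbf{Z}-block is convex in \textbf{Z} with \textbf{A} fixed; because there are no cross terms the objective is in fact jointly convex, and the claimed separate convexity is then a special case.

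The argument carries no genuine analytic obstacle; the only point requiring care is the bookkeeping that confirms the exponents of the Rayleigh survival and the exponential exposure are affine in the parameters, so that $\log S$ and $\log E$ are linear rather than merely log-concave, and that the negative log-hazard and negative log-exposure contribute the standard convex $-\log$ terms. I would also note that the rearrangement $f=h\cdot S$ is what keeps the transmission contribution clean: had one kept the density form directly, $-\log f_{ji}=-\log\alpha_{ji}-\log\Delta_{ji}+\tfrac12\alpha_{ji}\Delta_{ji}^2$ is still convex in $\alpha_{ji}$, so the conclusion is unchanged, but the split into hazard and survival is what makes the decoupling of the \textbf{A}- and \textbf{Z}-dependence most transparent.
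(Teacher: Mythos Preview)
Your proposal is correct and follows essentially the same approach as the paper: decompose the negative log-likelihood using the hazard--survival rearrangement, substitute the Rayleigh forms to obtain $-\log\alpha_{j,i}$ (convex) plus $\tfrac12\alpha_{k,i}\Delta_{k,i}^2$ (linear), and conclude convexity in \textbf{A} as a sum of convex functions, with \textbf{Z} handled by the analogous $-\log\eta_m$ term. Your write-up is in fact more complete than the paper's, since you spell out the exposure block explicitly (the paper only says ``similarly'') and you observe that the absence of cross terms actually yields joint convexity, not merely block-wise convexity.
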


\begin{proof}
	Keeping \textbf{Z} constant, the log likelihood function can be written as the sum of following components:
	
	\begin{multline}
	-\sum_{i:t_i \leq T_c} \sum_{k: t_k < t_i} \mbox{log} \ \  S(t_i | t_k; \alpha_{k, i}) = \\ -\sum_{i:t_i \leq T_c} \sum_{k: t_k < t_i} \alpha_{k, i} \frac{(t_i - t_k)^2}{2}
	\end{multline}
	and
	\begin{equation}
	-\sum_{i:t_i < T_c, j  \ = \  parent(i)} \mbox{log} \ \ h(t_i | t_j) = - \sum_{i:t_i < T_c} \mbox{log} \ \ \alpha_{j, i}(t_i - t_j)
	\end{equation}
\end{proof}

Since the sum of a linear and a convex function is convex, therefore the minimization problem in \textbf{A} is convex.

Similarly, we can show that the problem is convex in \textbf{Z} keeping constant \textbf{A}.\\

Considering the centrality modulated Rayleigh model, the final formulation can be written as

\begin{equation}
\label{eq:final_obj}
G(\textbf{A}, \textbf{Z}, \lambda, \beta) = G_1(\textbf{A}, \textbf{Z}) + n G_2(\lambda) + m G_3(\beta) 
\end{equation}

\begin{equation}
\label{eq:likelihood}
G_1(\textbf{A}, \textbf{Z}) = - \mbox{log} \ L(\textbf{A}, \textbf{Z})
\end{equation}
\begin{equation}
\label{eq:lambda}
G_2(\lambda) = \frac{1}{2P} \lVert \alpha - X \circ X \ . \ \lambda \rVert + \gamma_{\lambda} \lVert \lambda \rVert_1 
\end{equation}
\begin{equation}
\label{eq:beta}
G_3(\beta) = \frac{1}{2P} \lVert \eta - X \ . \ \beta \rVert + \gamma_{\beta} \lVert \beta \rVert_1
\end{equation}

\begin{theorem*}
	The final objective function in Equation~\ref{eq:final_obj} has a global minimum
\end{theorem*}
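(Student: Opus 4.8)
The plan is to reduce the existence of a global minimum to two ingredients: joint convexity of $G$ over the feasible set, and coercivity of $G$; existence of a minimizer then follows from the standard fact that a lower-semicontinuous, coercive function on a closed set attains its infimum, and convexity guarantees the minimum is global.

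First I would upgrade the preceding theorem from separate convexity to \emph{joint} convexity of $G_1$. Under the Rayleigh model the negative log-likelihood decouples additively: each hazard factor contributes $-\log \alpha_{j,i} - \log(t_i - t_j)$, each survival factor contributes the term $\alpha_{k,i}(t_i-t_k)^2/2$ which is linear in $\alpha$, and each exposure factor contributes $-\log \eta_m + (t_i - t_m)$. Hence $G_1(\mathbf{A},\mathbf{Z}) = \phi(\mathbf{A}) + \psi(\mathbf{Z})$ with $\phi$ convex in $\mathbf{A}$ (a sum of the convex map $-\log(\cdot)$ and linear terms) and $\psi$ convex in $\mathbf{Z}$. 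Since a sum of a convex function of one block of variables and a convex function of a disjoint block is jointly convex, $G_1$ is jointly convex on the feasible set $\{\alpha \ge 0,\ \eta \ge 0\}$.

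Next I would show $G_2$ and $G_3$ are jointly convex in their coupled arguments. In $G_2$ the residual $\alpha - X \circ X \,.\, \lambda$ is affine in $(\alpha,\lambda)$, so $\lVert \cdot \rVert$ of it is convex, and $\gamma_\lambda \lVert \lambda \rVert_1$ is convex; the same holds for $G_3$ in $(\eta,\beta)$. Because $G$ is a nonnegatively weighted sum ($n,m > 0$) of jointly convex functions over a convex feasible set, $G$ is jointly convex in $(\mathbf{A},\mathbf{Z},\lambda,\beta)$.

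The decisive and hardest step is coercivity, i.e. $G \to +\infty$ whenever $\lVert(\mathbf{A},\mathbf{Z},\lambda,\beta)\rVert \to \infty$ or a parameter approaches the boundary $\alpha \to 0^+$ or $\eta \to 0^+$. I would argue direction by direction: growth of $\lambda$ or $\beta$ is driven to infinity by the $\ell_1$ terms; growth of any $\alpha_{k,i}$ appearing in a survival factor is penalized by its linear term, while the $-\log\alpha$ barrier prevents $G$ from decreasing to $-\infty$ as $\alpha \to 0^+$, and symmetrically for $\eta$. The main obstacle is the coupling: growth in $\mathbf{A}$ could in principle be absorbed by matching growth in $\lambda$ so that the $G_2$ residual stays bounded, and one must verify that the $\ell_1$ penalty on $\lambda$ together with the survival and barrier terms on $\alpha$ still force $G \to \infty$ along every such joint direction, and likewise for $(\eta,\beta)$. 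Once coercivity is established, every nonempty sublevel set is closed and bounded, hence compact; the continuous (indeed lower-semicontinuous) $G$ attains its minimum on it, and by joint convexity this minimizer is global.
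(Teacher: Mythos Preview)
Your route is genuinely different from the paper's and, in fact, more rigorous. The paper does not argue via joint convexity and coercivity; instead it (i) asserts that the least-squares pieces $G_2, G_3$ obviously have global minima, and (ii) for $G_1$ checks coordinate-wise second-order conditions, namely $\partial^2 L/\partial\alpha_{j,i}^2 < 0$ and $\partial^2 L/\partial\eta_m^2 < 0$, together with a sign change of the corresponding first partials, to conclude each one-dimensional section of $L$ has a unique maximum and hence $L$ has a global maximum. That argument is thin: coordinate-wise concavity and coordinate-wise maxima do not by themselves yield a joint global maximum, and the paper never addresses the coupling of $\alpha$ with $\lambda$ through $G_2$ (or of $\eta$ with $\beta$ through $G_3$). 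Your approach sidesteps these gaps by establishing joint convexity of $G$ over the full parameter block and then arguing coercivity directly, which is the cleaner and more defensible line.

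The part of your proposal that still needs work is exactly where you flag it: coercivity along coupled directions. Two points deserve explicit treatment. First, unlike $\alpha$, the exposure parameter $\eta_m$ enters $G_1$ only through $-\log\eta_m$ (there is no survival-type linear penalty on $\eta$), so as $\eta_m \to \infty$ the $G_1$ contribution tends to $-\infty$; you must show that the linear growth of $m\gamma_\beta\lVert\beta\rVert_1$ (or of $\lVert\eta - X\beta\rVert$ when $\beta$ stays bounded) dominates this logarithmic decay, which requires $\gamma_\beta>0$ and $m>0$. Second, an $\alpha_{k,i}$ that never occurs as a parent edge in any training cascade carries no $-\log$ barrier in $G_1$, so its coercivity must come from the survival term (if it appears there) or else from the residual in $G_2$; you should say which. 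Once these cases are nailed down your argument is complete and strictly stronger than the paper's.
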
 

\begin{proof}
	It is evident that the least squares formulation has a global minimum for Equations~\ref{eq:lambda} and Equation~\ref{eq:beta}. We need to prove that Equation~\ref{eq:likelihood} has a global minimum, that is the log likelihood has a global maximum. 
	
	Using the second order derivative computations, it can be shown that  $\frac{\partial^2 L }{\partial \alpha_{j, i}^2}$ $<0$ and  $\frac{\partial^2 L }{\partial \eta_{j}^2}$ $< 0$.
	Since $\frac{\partial^ L }{\partial \alpha_{j, i}^2}$ $<0$ and  $\frac{\partial^2 L }{\partial \eta_{j}^2}$ $< 0$, the conditional marginal posterior densities of $\alpha_{j, i}$ and $\eta_k$ are log-concave. Moreover, for specific ranges of $\alpha_{j, i}$ and $\eta_k$, the values of  $\frac{\partial L }{\partial \alpha_{j, i}}$ $>0$ and $\frac{\partial L }{\partial \eta_k}$ $>0$ and for certain other ranges, $\frac{\partial L }{\partial \alpha_{j, i}}$ $<0$ and $\frac{\partial L }{\partial \eta_k}$ $<0$. This means that there should be a global maximum of L and hence the final objective formulation should have a global minimum.
\end{proof}

Since global minimum is guaranteed, we use a coordinate descent algorithm that uses the following over multiple iterations:

\begin{equation*}
\alpha^{[it+1]} = argmin_\alpha G(\alpha, \eta^{[it]},\lambda^{[it]}, \beta^{[it]}) 
\end{equation*} 

\begin{equation*}
\eta^{[it+1]} = argmin_\eta G(\alpha^{[it+1]}, \eta,\lambda^{[it]}, \beta^{[it]}) 
\end{equation*}

\begin{equation*}
\lambda^{[it+1]} = argmin_\lambda G(\alpha^{[it+1]}, \eta^{[it+1]},\lambda, \beta^{[it]})
\end{equation*}

\begin{equation*}
\beta^{[it+1]} = argmin_\beta G(\alpha^{[it+1]}, \eta^{[it+1]},\lambda^{[it+1]}, \beta) 
\end{equation*}

\section{Extracting Exposure Nodes using Network Motifs}
In this section, we give a detailed algorithm for pruning the exposure nodes $exp(v)$, $v \in
V^{\tau''} $from the temporal network $G^{\tau'} \cup G^{\tau''}$, where $\tau'$ precedes $\tau''$ in the order of the subsequences in the cascade in consideration. Note that these exposure nodes for a node is compute cascade wise.
\begin{algorithm}[!h]
	\KwIn{Verties $V^{\tau''}$, Motif patterns $\mathcal{M}_{E}$, temporal network $G^N$ =  $G^{\tau'}_C$ $\cup$ $G^{\tau''}$ = $(V^N, E^N)$, threshold $th$ for the AND gate constraint, size $k$ of motifs. }
	\KwOut{$\zeta_{v, C}$ for all $v \in V^{\tau''}$}
	
	$\mathcal{M}_C^N, \ \{M_C^N\}$ = compute\_motifs \ ($G^N$, size=$k$) \tcp*{M denotes all patterns found in $G^N$, \{M\} denotes the list of 3-vertex maps for the corresponding patterns} 
	
	\For{each pattern $M$ $\in$  $\mathcal{M}_C^N$}{
		\If{M is not ismorphic to any pattern in $\mathcal{M}_{E}$ }{
			Skip this pattern and continue
		}
		\For{each $tgt$ in $V^{\tau''}$}{
			$src$ $\leftarrow$ $ parent(tgt)$ \;
			$\zeta_{tgt, C}$ $\leftarrow$ $\phi$\;
			$\delta_{tgt, C}$ $\leftarrow$ $\{\}$ \tcp*{empty hash table data structure}
			\For{each $m$ $\in $ $\{M_C^N\}$}{
				\If{src or tgt is not part of vertices in $m$}{
					Skip this instance and continue
				}
				
				$tv$ = $V(m)$ $ \setminus $ $\{src, tgt\}$ \tcp*{third vertex in $m$ barring the source and target pair} 
				\If{$t_{tv}$ $<$ $t_{tgt}$}{
					$\zeta_{tgt, C}$ $\leftarrow$ $\zeta_{tgt, C}$ $\cup$ $tv$ \;
					$\delta_{tgt, C}[tv]$ $\leftarrow$ $\frac{1}{t_{tgt} - t_{tv}}$\;
					
				} 
				
			}
		}
		
		\tcc{AND gating constraint: iterate over all the time elemenst until their product crosses the threshold }
		prod $\leftarrow$ multiply \ ($\delta_{tgt, C}.values()$)   \tcp*{$prod$ holds the product of all elements of $\delta_{tgt, C}$ }
		sorted\_map $\leftarrow$ sort $\delta_{tgt, C}$ by its elements \;
		\While{prod $\leq$ $th$ }{
			Delete the first element of sorted\_map \;
			prod $\leftarrow$ multiply \ ($sorted\_map.values()$)\;
		}
	}
	$\zeta_{tgt, C}$ $\leftarrow$ $sorted\_map.keys()$\;
	
	return $\zeta_C$
	
	\caption{Computing $\zeta_{v, C}$}
	\label{alg:zeta_comp}
\end{algorithm}

\end{document}